\font\bb=msbm10 scaled 1200
\newcommand{\beq}{\begin{equation}}
\newcommand{\eeq}{\end{equation}}
\newcommand{\bea}{\begin{align}}
\newcommand{\eea}{\end{align}}
\newcommand{\fl}{\raggedright}
\newcommand{\nonu}{\nonumber}
\newcommand{\rf}[1]{(\ref{#1})}
\newcommand{\pa}{\partial}
\renewcommand{\bm}[1]{\mbox{\boldmath${#1}$}}
\newcommand{\M}{\bm M}
\newcommand{\R}{\mbox {\bb R}}
\renewcommand{\L}{\mbox{$\cal{L}$}}
\newcommand{\Tr}{\mbox{\sf tr\,}}
\newcommand{\sst}{\scriptstyle}
\newcommand{\dst}{\displaystyle}
\newcommand{\ipd}{p}
\newcommand{\pv}{-\hspace{-12pt}\int}
\newcommand{\oh}{\frac{1}{2}}
\renewcommand{\phi}{\varphi}
\renewcommand{\rho}{\varrho}
\renewcommand{\theta}{\vartheta}
\renewcommand{\leq}{\leqslant}
\renewcommand{\geq}{\geqslant}
\def\res{\mathop{\textrm{Res}}\limits} 
\newtheorem{theorem}{Theorem}
\begin{document}

\title{Lyapunov exponent of the random Schr\"{o}dinger operator \\ with short-range correlated noise potential}

\author{Yuri~A. Godin}
\email{ygodin@uncc.edu}
\affiliation{%
Department of Mathematics and Statistics, University of North Carolina at Charlotte, Charlotte, NC 28223, U.S.A.
}%
\author{Stanislav Molchanov}
  \email{smolchan@uncc.edu}
\affiliation{%
Department of Mathematics and Statistics, University of North Carolina at Charlotte, Charlotte, NC 28223, U.S.A.
}%

\author{Boris Vainberg}
\email{brvainbe@uncc.edu}
\affiliation{%
Department of Mathematics and Statistics, University of North Carolina at Charlotte, Charlotte, NC 28223, U.S.A.
}%

\date{\today}

\begin{abstract}
We study the influence of disorder on propagation of waves in one-dimensional structures. Transmission properties of the process governed by the Schr\"{o}dinger equation with the white noise potential can be expressed through the Lyapunov exponent $\gamma$ which
we determine explicitly as a function of the noise intensity $\sigma$ and the frequency $\omega$. We find uniform two-parameter asymptotic expressions for $\gamma$ which allow us to evaluate $\gamma$ for different relations between $\sigma$ and $\omega$. The value of the Lyapunov exponent is also obtained in the case of a short-range correlated noise, which is shown to be less than its white noise counterpart. 
\end{abstract}

\keywords
{Lyapunov exponent; random Schr\"{o}dinger equation; white noise; short-range correlated noise; asymptotics}

\maketitle

{\hfill {\em In memory of M. I. Vishik, great mathematician and dear colleague.}}

\section{Introduction}
\setcounter{equation}{0}

Analysis of the Lyapunov exponent plays an important role in the problems of
 propagation of waves.  Much attention in recent
years has been given to periodic and disordered one-dimensional 
structures (see, e.g., \cite{MS08} and references therein). Application of such 
structures include photonic crystals, coupled-resonator optical waveguides,
slow-light devices, metamaterials, semiconductor superlattices with given transmission
properties and more. Characteristics of real materials, however,
may be considerably different from predicted ones due to inevitable
disorder of geometric and material parameters. Structural perturbation and manufacturing inaccuracies may
lead to a substantial reduction of performance \cite{F09}, \cite{F08} and
even to complete suppression of wave propagation \cite{T07}, \cite{M08}.
In the strongest form the disorder may manifest itself in the Anderson type localization,
which is a general wave phenomenon for classical and quantum waves in disordered systems.
Due to multiple scattering, destructive interference
occurs, so that the wave amplitude decreases exponentially in the medium. The
rate of decay of the solution is defined by the Lyapunov exponent, $\gamma$ (the inverse of 
the localization length). 

The Lyapunov exponent is of major importance in the problems of
slowing down light by orders of magnitude, a phenomenon which has been extensively discussed in the literature \cite{FV06}.
Such a possibility can be useful in a variety of optical and microwave applications.
Choosing wave frequency near stationary points of the electromagnetic dispersion relation
allows one to reduce the group velocity by two orders of magnitude \cite{FV06}-\cite{NYSTTY01}. 
Another model for slowing down light proposed in \cite{MV05} is based on the periodic necklace-type waveguide.
An essential common feature of such devices is that the frequency of propagating wave must be located near the spectral band edge or degenerate band edge of the corresponding medium. 

One of the major factor, however, limiting performance of optical waveguides is their structural imperfection, and  
investigation of the influence of different kinds of disorder on the Lyapunov exponent becomes of primary importance in manufacturing of optical devices.
Papers \cite{JT08a}-\cite{JT08} performed sensitivity analysis of photonic crystals near degenerate and regular spectral band edge under perturbations of geometrical and material parameters, and the impact of these perturbations was compared. %photonic crystal responses. 
Numerical simulations were used to investigate the sensitivity of field enhancement effects at Fabry-Perot resonances against the bandwidth of the excitation. The gigantic amplitude increase produced by such resonances was observed, which dramatically deteriorates under small perturbations of the thickness of layers. In particular, the peak field intensity of a 16 layer stack regular band edge photonic crystal decreases by an order of magnitude under $1\%$ perturbation of the layer thicknesses.

The central idea of chaotic behaviour of dynamical systems is 
widely seen as strong dependence on initial conditions \cite{ABC04}.
In a chaotic system, a very small change in the initial conditions can cause a big change in the trajectory. Quantitative criterion for sensitivity is characterized by the Lyapunov exponent (also called the maximal Lyapunov exponent) which measures exponential growth rate of the solution.

In all the applications, the influence of the strength of perturbation $\sigma$ is essentially different depending on how close the frequency $\omega$ is to the band edge.
While $\gamma \sim \sigma^2$ in the bulk of the band, it has order $\gamma \sim \sigma^{\frac{2}{3}}$ near the band edges \cite{DG84}-\cite{GMV11}. Therefore, it is important to construct uniform two-parameter asymptotic expansion of the Lyapunov exponent for all frequencies of wave propagation. This construction is a central goal of the present paper.

Let $\gamma_1=\gamma_1(\omega_1,\sigma)$ be the Lyapunov exponent of the optical equation
\begin{equation}
 -\psi^{\prime \prime}(x) = \omega^2_1 n^2(x,\sigma) \psi(x),
 \label{oe}
\end{equation}
where $0<\sigma \leq \sigma_0$ denotes the strength of perturbation of either the value of the periodic refraction index $n(x,0)$
or the length of its period, and $\omega_1$ is located in an $\Omega$-neighbourhood of a non-degenerate band edge.
Let also $\gamma=\gamma(\omega,\sigma)$ be the Lyapunov exponent of the Schr\"{o}dinger equation
\begin{equation}
 -\psi^{\prime \prime}(x) + \sigma \dot{w}(x)\psi(x) = \omega^2  \psi(x),
 \label{sche}
\end{equation}
with the white noise potential $\dot{w}(x)$ and the frequency $\omega$ located near
the spectral edge $\omega=0$. The frequency $\omega$ is defined by $\omega_1$ using the relation 
\begin{equation}
 \cos \omega = \frac{1}{2}\, \Tr \M (\omega_1), \quad \omega_1 \in \Omega.
\end{equation}
Here $\M$ is the transfer matrix of \rf{oe}.
Then in \cite{GMV11} it was shown that $\gamma_1$ and
$\gamma$ estimate each other uniformly. That means 
\begin{equation}
 c_1 \gamma \leq \gamma_1 \leq c_2 \gamma,
\end{equation}
where positive constants $c_1$ and $c_2$ depend only on $\sigma_0$ and $\Omega$ but not on the values of $\sigma$ and $\omega_1$. 
Thus, a uniform two-parameter asymptotic expansion of the Lyapunov exponent of the Schr\"{o}dinger equation \rf{sche} provides a uniform estimate of the Lyapunov exponent of the wave equation \rf{oe}.
Because of this, we consider in details the Schr\"{o}dinger equation
with different relations between the intensity of noise $\sigma$ and the frequency $\omega$.
We will obtain a uniform asymptotic expansion of $\gamma$ followed by numerical evaluation of $\gamma$ to determine the range of parameters for which the asymptotic formulas can be used.

There are different interpretations of the Schr\"{o}dinger equation with white noise potential.
The Lyapunov exponent defined by the equation with white noise potential is usually understood as the limit of ones defined by the equation with regular potentials whose correlation length $\Delta$ tends to zero (cf. \cite{LGP88}).
In our approach we pass to the limit in the equation itself that results in a system of It\^{o} stochastic  differential equations. Thus, we define the Lyapunov exponent through solutions of the corresponding $\Delta$-independent stochastic equations.

Papers \cite{PW92}-\cite{PW91} used different definition of equation with white noise potential. We obtain the same result for $\gamma(\omega)$ as the one obtained in \cite{PW92}-\cite{PW91} for the leading term of the asymptotics of $\gamma_\Delta (\omega)$ as $\Delta \to 0$. Moreover, our approach (which is also convenient for numerical simulations) allows us
to find the second term $c_1$  of $\gamma_\Delta (\omega)=\gamma (\omega)-c_1 \Delta + \ldots$, where $c_1>0$, which provides the rate of decay of the Lyapunov exponent when $\omega$ is fixed and the correlation length $\Delta$ increases.

\section{Integral representation of the Lyapunov exponent}
\setcounter{equation}{0}

In this paper we study the asymptotic behaviour of the Lyapunov exponent
$\gamma(\lambda,\sigma)$ of the random one-dimensional Schr\"{o}dinger
operators
\begin{equation}
\begin{array}{l}
 H\psi=-\psi^{\prime \prime}(x) + \sigma \dot{w}(x) \psi = \lambda \psi,
\quad x >0, \\[2mm]
\psi(0) = \cos \theta_0, \quad \dst \frac{\psi^\prime (0)}{\omega} = \sin
\theta_0,
\end{array}
\label{H}
\end{equation}
as a function of the energy $\lambda$  and the coupling constant (intensity)
$\sigma$ of the white noise potential.
Here $\omega=\sqrt{|\lambda|}$ (equations with any real $\lambda$ will be
considered). The white noise $\dot{w}(x)$ can be understood as the limit of a stationary Gaussian process $\xi(x)$
with independent increments on intervals of length $\Delta, \; \Delta \ll 1$ (cf. \cite{GMV11}):
%\end{document}
\begin{equation}
 \dot{w}(x) = \lim_{\Delta\to 0
}\frac{\xi_n}{\sqrt{\Delta}}, \quad x \in [n\Delta,(n+1)\Delta), 
\label{noise}
\end{equation}
and $\xi_n$ are independent identically distributed  ${\cal N} (0,1)$ random variables. An equivalent alternative approach to define the operator with a white noise
potential (which is used in the present paper) is based on the reduction of
(\ref{H}) to a system of It\^o stochastic differential equations of the first order 
for the phase $\theta$ and amplitude $r$ of $\psi$. In what follows we will denote 
$z = -\cot \theta$. Then the It\^o and Stratonovich representations of \rf{H} coincide
in coordinates $(z,r)$ and both give the same result (see \cite{PF92}).

After the standard substitution (in the Pr\"{u}fer form)
\[
u=\psi,~ v= \frac{\psi^\prime}{\omega},~~\omega=\sqrt{|\lambda|},
\]
equation  (\ref{H}) takes the form
\begin{equation}\label{sfo}
\begin{array}{l}
du = \omega v \,dx, \\
dv = \dst -\omega u\,dx + \frac{\sigma u \,dw}{\omega}.
\end{array}
\end{equation}
Usually the reduction of the Schr\"{o}dinger equation to a system of two
first order equations is accompanied by the phase-amplitude
formalism, i.e., by using the polar coordinates
\begin{equation}
\begin{array}{l}
u=\psi = r (x) \sin \theta (x), \\[2mm]
\dst v=\frac{\psi^\prime}{{\omega} } =  r (x) \cos \theta (x),
\end{array}
\label{pol}
\end{equation}
where $\theta \in [0,\pi)$.

The Lyapunov exponent $\gamma (\lambda ,\sigma )$ is defined by
\begin{equation}
\gamma (\lambda ,\sigma )=\lim_{t\rightarrow \infty }\frac{\ln
r(x)}{x}.%
\label{gamma}
\end{equation}%

This limit exists {\it P}-a.s. on the probability space $(\Omega, {\cal F}, {\cal P})$
defined by the white noise $\dot{w}(\cdot)$ in \rf{H} (see \cite{PF92}, \cite{CL90}).
We will discuss the limit behaviour of $%
\gamma (\lambda ,\sigma )$ in the different regimes when $\sigma \rightarrow
0$ or $\sigma \rightarrow \infty $ (small and large disorder) and $\sqrt{%
|\lambda| }=\omega \rightarrow 0$ or $\sqrt{|\lambda| }=\omega \rightarrow
\infty $ (long and short waves). We also introduce some combinations of $%
\lambda $ and $\sigma $ to describe the intermediate asymptotics.

Asymptotic behaviour of $\gamma$ for $\sigma \to 0$ and fixed $\lambda > 0$ was
established in papers \cite{APW86} and \cite{P86} where it was shown that 
\begin{equation}
\gamma \sim \frac{\sigma^2}{8\omega^2}, \quad \sigma \to 0.
\label{gam0}
\end{equation}
Paper \cite{PW88} contains the formula $\gamma (0,\sigma) \sim \sigma^{\frac{2}{3}}$ 
for fixed $\lambda = 0$ and $\sigma \to \infty$.
We show that in fact $\gamma(\omega,\sigma) = \omega f(\nu)$, where $\nu =
\frac{2\omega^3}{\sigma^2}$ and $f$ is a function of $\nu$. Thus, \rf{gam0} is
also valid for arbitrary $\sigma$ and $\omega$ such that $\nu \to \infty.$ 

In this paper we suggest a simple approach to obtaining the asymptotic expansion
for $\gamma$ as $\nu \to \infty$ as well as $\nu \to 0$. Therefore, different
combinations of large and small perturbations and frequencies are covered. We
show that $\gamma = C \sigma^{\frac{2}{3}}, \; \nu \to 0$. In particular, it provides
low frequency asymptotics of $\gamma$ when $\sigma$ is fixed and $\lambda \to
0$. The case of $\lambda < 0$ is also considered. The next section contains an exact
formula for $\gamma$ followed by a section with derivation of its asymptotics
expansion.
We will use equations (\ref{sfo}) to obtain the integral representations for $\gamma$.
\begin{theorem}
The following formulas for the Lyapunov exponent hold P-a.s.
\begin{enumerate}
 \item[1.]
If $\lambda>0$ then
\begin{equation}\label{gamma1}
 \gamma(\omega,\sigma) = \omega \nu^{-1}\int_{-\infty}^{\infty }\ipd(x)\,\frac{1-x^2}{(1+x^2)^2}\, dx,
\end{equation}
where
\begin{equation}
 \ipd(x) = C e^{\Phi(x)}\int_x^{\infty} e^{-\Phi(t)} \,dt, \quad \Phi(x) = \nu \left(x + \frac{x^3}{3}\right), \quad \nu = \frac{2\omega^3}{\sigma^2},
\label{p}
\end{equation}
with a normalizing constant $C$ given by
\begin{equation}
 C^{-1} = \sqrt{\frac{2\pi}{\nu}}\int_0^\infty \frac{e^{-2\Phi(x) }}{\sqrt{x}}\,dx.
\label{Cp}
\end{equation}
\item[2.]
If $\lambda<0$ then
\begin{equation}
 \gamma(\omega,\nu) = \omega \nu^{-1}\int_{-\infty}^{\infty }\ipd(x)\,\frac{1-x^2}{(1+x^2)^2}\, dx - 2\omega \int_{-\infty}^{\infty }\ipd(x)\,\frac{x}{1+x^2}\, dx.
\label{gamma-}
\end{equation}
where
\begin{equation}
 \ipd(x) = C e^{\Psi(x)}\int_x^{\infty} e^{-\Psi(t)} \,dt, \quad \Psi(x) = \nu \left(\frac{x^3}{3}-x\right), \quad \nu = \frac{2\omega^3}{\sigma^2},
\label{p-}
\end{equation}
and
\begin{equation}
 C^{-1}  = \sqrt{\frac{2\pi}{\nu}}\int_0^\infty \frac{e^{-2 \Psi(x)}}{\sqrt{x}}\,dx.
\label{c-}
\end{equation}
\end{enumerate}
\end{theorem}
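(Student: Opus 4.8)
The plan is to pass to the Pr\"ufer variables $(r,\theta)$ and the Riccati variable $z=-\cot\theta$, whose It\^o equation has constant noise coefficient (so that, as noted above, the It\^o and Stratonovich forms agree there), then to reduce $\gamma$ to an ergodic average of the drift of $\ln r$, and finally to identify the stationary law of $z$ explicitly.

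First I would apply It\^o's formula to \rf{sfo}. Since only $dv$ carries noise, the sole quadratic variation is $(dv)^2=\sigma^2u^2\omega^{-2}\,dx$, and for $\lambda>0$ one obtains
\[
d(\ln r)=-\frac{\sigma^2}{2\omega^2}\sin^2\theta\cos2\theta\,dx+\frac{\sigma}{\omega}\sin\theta\cos\theta\,dw,
\qquad
dz=\omega(1+z^2)\,dx-\frac{\sigma}{\omega}\,dw,
\]
while for $\lambda<0$ (where $dv=\omega u\,dx+\sigma u\,dw/\omega$) an extra term $2\omega\sin\theta\cos\theta\,dx$ appears in $d(\ln r)$ and the Riccati drift becomes $\omega(z^2-1)$. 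Dividing $\ln r(x)$ by $x$ and letting $x\to\infty$: the stochastic integral has bounded integrand, so its quadratic variation is $O(x)$ and the integral contributes $0$ $P$-a.s.\ by the law of large numbers for martingales; the drift term converges $P$-a.s.\ to the integral of its coefficient against the unique invariant probability measure $\mu$ of the (ergodic) phase diffusion, by Birkhoff's theorem. Thus $\gamma=\int g\,d\mu$ with $g$ the drift coefficient of $\ln r$. Using $\sin^2\theta=(1+z^2)^{-1}$, $\cos2\theta=(z^2-1)/(z^2+1)$, $\sin\theta\cos\theta=-z/(1+z^2)$ and $\sigma^2/(2\omega^2)=\omega/\nu$, the coefficient $g$ is exactly the integrand of \rf{gamma1} for $\lambda>0$, and that of \rf{gamma-} (with the additional $-2\omega\,z/(1+z^2)$ piece) for $\lambda<0$.

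It remains to identify $\mu$ in the $z$-chart, i.e.\ the stationary density $p$. Because the Riccati drift drives $z$ off to $+\infty$ and back in from $-\infty$ (the wrapping of $\theta$ through $\pi$), the stationary Fokker--Planck equation holds with a nonzero constant probability flux $J$, i.e.\ $\tfrac{\sigma^2}{2\omega^2}p'-\omega(1+z^2)p=-J$. With $\Phi$ as in \rf{p} one has $\Phi'(z)=\nu(1+z^2)=\tfrac{2\omega^3}{\sigma^2}(1+z^2)$, so this reads $(e^{-\Phi}p)'=-\tfrac{2\omega^2J}{\sigma^2}e^{-\Phi}$; integrating from $z$ to $+\infty$ and using $e^{-\Phi}p\to0$ there gives $p(x)=Ce^{\Phi(x)}\int_x^\infty e^{-\Phi(t)}\,dt$ with $C=2\omega^2J/\sigma^2$, which is \rf{p}; the homogeneous solution $e^{\Phi}$ is discarded as non-integrable, so $p$ is the unique stationary density, and $J$ (hence $C$) is fixed by $\int_\R p=1$. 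To evaluate that normalization I would substitute $t=x+s$ in $\int_{-\infty}^\infty e^{\Phi(x)}\int_x^\infty e^{-\Phi(t)}\,dt\,dx$, use $\Phi(x+s)-\Phi(x)=\nu s\big[(x+\tfrac s2)^2+1+\tfrac{s^2}{12}\big]$, carry out the Gaussian integral in $x$ to obtain $\sqrt{\pi/(\nu s)}$, and then set $s=2x$ to recognize $\nu s(1+s^2/12)=2\Phi(x)$, which turns $\int_\R p=1$ into \rf{Cp}. The case $\lambda<0$ is identical with $\Psi$ of \rf{p-} in place of $\Phi$, the analogous identity $\Psi(x+s)-\Psi(x)=\nu s\big[(x+\tfrac s2)^2+\tfrac{s^2}{12}-1\big]$, and the extra drift term producing the second integral in \rf{gamma-}; this yields \rf{p-}, \rf{c-}, \rf{gamma-}.

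The computational steps — It\^o's formula, solving the first-order stationary Fokker--Planck equation, and the Gaussian identity for the normalization — are routine. The main obstacle is the rigorous justification of the reduction $\gamma=\int g\,d\mu$: one must know that the phase process (equivalently the $z$-process, with its jump from $+\infty$ to $-\infty$) is positive recurrent with a unique invariant probability measure and is ergodic, so that Birkhoff's theorem applies and the a.s.\ limit in \rf{gamma} equals the claimed deterministic constant. For $\lambda>0$ this is classical (uniform ellipticity on compacts, inward drift at large $|z|$); for $\lambda<0$ the same holds, now with $z=-1$ an attracting point of the drift; in both cases I would invoke the ergodicity results of \cite{PF92}, \cite{CL90} rather than reprove them.
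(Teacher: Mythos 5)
Your proposal is correct and follows essentially the same route as the paper: Pr\"ufer variables, the Riccati process $z=-\cot\theta$ with It\^o's formula, identification of $\gamma$ as the ergodic average of the drift of $\ln r$ against the stationary density of $z$ (with the bounded-integrand martingale term vanishing a.s.), explicit solution of the stationary Fokker--Planck equation, and the same $t=x+s$ Gaussian substitution for the normalizing constant. The only differences are expository — you make the nonzero probability flux $J$ and the appeal to ergodicity of the phase diffusion explicit, which the paper leaves implicit by citing the same references.
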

\begin{proof}
Let $\lambda>0.$ We rewrite (\ref{sfo}) in polar coordinates \rf{pol} but use $z=-\cot\theta$ instead of $\theta$. The resulting equations will be much simpler in the coordinates $(z,r)$ than in $(r,\theta)$. One only needs to keep in mind that the process  $z(x)=-\cot\theta(x)$ has boundary conditions at infinity: if $z(x)\to\infty$ when $x$ approaches from the left to some finite value $x_0$, then $z(x_0+0)=-\infty$. This jump corresponds to the transition of the angle variable $\theta=\theta(x)$ from $\pi$ to zero.
We will use the It\^o formula in order to write equations for $(z,r)$. The It\^o formula says that for any smooth function $F=F(u,v)$,
\begin{equation}
 dF(u(x),v(x)) = \frac{\pa F}{\pa u} du + \frac{\pa F}{\pa v} dv
 + \frac{1}{2}\frac{\pa^2 F}{\pa v^2} dv^2,
\end{equation}
where $du,~dv$ are given by (\ref{sfo}) and $dw^2=dx$. There is only one quadratic in $du,~dv$ term
on the right, since equation \rf{sfo} for $du$ does not contain $dw$. If we put $du,~dv$ from (\ref{sfo}) into the It\^o formula we arrive at
\begin{align}
dF(u,v) &=  \frac{\pa F}{\pa u} \omega vdx + \frac{\pa F}{\pa v} \left( \frac{\sigma u \,dw}{\omega} - \omega u\,dx \right)
 + \frac{1}{2}\frac{\pa^2 F}{\pa v^2} \frac{\sigma^2 u^2}{\omega^2}dx \nonu \\[2mm]
&= \left(\frac{\pa F}{\pa u}\omega  v-\frac{\pa F}{\pa v} \omega u + \frac{1}{2}\frac{\pa^2 F}{\pa v^2} \frac{\sigma^2 u^2}{\omega^2}  \right)dx + \frac{\sigma}{\omega} \frac{\pa F}{\pa v} udw .
\label{genf}
\end{align}

Since $\dst z=-\cot \theta=-\frac{v}{u}$ (see (\ref{pol})), we apply (\ref{genf}) to function $\dst F(u,v)=z=-\frac{v}{u}$. For this function $F$ we have 
\begin{align}
\frac{\pa F}{\pa u} = \frac{v}{u^2},\quad
 \frac{\pa F}{\pa v}= -\frac{1}{u}, \quad
\frac{\pa^2 F}{\pa v^2} = 0,
\end{align}
and therefore,
\begin{equation}\label{eqz}
dz= \omega( 1+z^2)\,dx-\frac{\sigma}{\omega}\,dw. 
\end{equation}

We will need function $\ln r=\frac{1}{2}\ln (u^2+v^2)$, not $r$. Thus from (\ref{genf}) with $F(u,v)=\frac{1}{2}\ln (u^2+v^2)$ we obtain
\begin{equation}\label{eqln}
 d\ln r=\frac{\sigma^2 }{2\omega^2}\frac{1-z^2}{(1+z^2)^2}\,dx-\frac{\sigma}{\omega}\frac{z}{1+z^2}\,dw.
\end{equation}

The right-hand side of the latter equation does not depend on $r$. Thus the solution of the original Schr\"{o}dinger  equation (\ref{H}) is reduced to a solution of the stochastic equation (\ref{eqz}) followed by a direct integration of (\ref{eqln}).

Consider the generator of the diffusion process $z(x)$:
\[
\mathcal L =\mathcal L_z =\frac{\sigma^2 }{2\omega^2}\frac{\pa^2 }{\pa z^2}+\omega( 1+z^2)\frac{\pa}{\pa z}.
\]
The probability density $p(x,z_1,z_2)$ (with respect to $z_2$ with initial point $z_1$) of the diffusion process $z(x)$ satisfies the equation
\begin{equation}
 \frac{\partial p}{\partial x}= \L_{z_1} p = \L^*_{z_2} p\quad {\text {with}} \quad \left. p \right|_{x=0} = \delta(z_1 - z_2),
\end{equation}
and the limiting probability density $\ipd$ is given by
\begin{equation}
 \L^\ast \,\ipd = 0
\end{equation}
or
\begin{equation}
 \frac{\sigma^2}{2\omega^3}\,\frac{d^2 \ipd}{d z^2} - \frac{d}{d z}\left((1+z^2)\,\ipd \right)=0.
\label{dep}
\end{equation}
The solution of the last equation has the form (\ref{p}) where the normalizing constant is defined by the condition
\begin{equation}
\int_{-\infty}^\infty
\ipd(z)dz=1,
\label{norm_p}
\end{equation}
that is,
\[
C^{-1} = \int_{-\infty}^\infty  e^{\Phi(z)} \,dz\int_z^{\infty} e^{-\Phi(\tau)}\,d\tau .
\]
We make the substitution $\tau=s+z$ here and evaluate the integral in $z$. This leads to (\ref{Cp}). After the limiting probability density $\ipd(z)$ is found, formula \rf{gamma1} is an immediate consequence of (\ref{gamma}) and (\ref{eqln}). One needs only to note that the boundedness of the integrand in the second term in (\ref{eqln}) implies that $P$-a.s.
\[
\frac{1}{x}\int_0^x\frac{z}{1+z^2}dw \to 0 \quad {\text{as}} ~~ x\to\infty. 
\]

If $\lambda \leq 0$ then we denote $\omega^2 = -\lambda$ in \rf{H}, and the system \rf{sfo} becomes
\begin{equation}\label{sfo1}
\begin{array}{l}
du = \omega v \,dx, \\
dv = \dst \omega u\,dx + \frac{\sigma u \,dw}{\omega}.
\end{array}
\end{equation}
The stochastic differential equation for $z(x)=-\cot \theta$ analogous to \rf{eqz} has the form 
\begin{equation}\label{eqz1}
dz= \omega(z^2-1)\,dx-\frac{\sigma}{\omega}\,dw,
\end{equation}
The stochastic equation for $\ln r$ similar to \rf{eqln} is
\begin{equation}\label{eqln1}
 d\ln r=\left[\frac{\sigma^2 }{2\omega^2}\frac{1-z^2}{(1+z^2)^2} -\frac{2\omega z}{1+z^2} \right] dx
-\frac{\sigma}{\omega}\frac{z}{1+z^2}\,dw.
\end{equation}
Determining the limiting probability density $\ipd$ as in \rf{dep}
we obtain that it satisfies the equation
\begin{equation}
 \frac{\sigma^2}{2\omega^3}\,\frac{d^2 \ipd}{d z^2} - \frac{d}{d z}\left((z^2-1)\,\ipd \right)=0.
\label{dep1}
\end{equation}
The solution of this equation is given by \rf{p-} with a normalizing constant
determined from \rf{c-}. Therefore, the expression of the Lyapunov exponent
is given by \rf{gamma-}.
\end{proof}

\section{Asymptotics of the Lyapunov exponent}
\setcounter{equation}{0}

The exact expression of the Lyapunov exponent $\gamma$ is given by \rf{gamma1} if $\lambda \geq 0$ and by \rf{gamma-} if $\lambda \leq 0$, and can be calculated numerically. Although equation \rf{H} depends on two parameters $\lambda$ and $\sigma$, formula \rf{gamma1} implies that $\gamma$ depends on the product of $\omega=\sqrt{|\lambda|} $ and a function of $\nu = \frac{2\omega^3}{\sigma^2}$. The next theorem provides the asymptotic behaviour of the Lyapunov exponent for $\nu \to 0$ and $\nu \to \infty$.

\begin{theorem}
Lyapunov exponent of the Schr\"{o}dinger operator \rf{H} has the following asymptotic behaviour:
 \begin{itemize}
\item[(a)] If $\lambda \gtrless 0$ and $\nu \to 0$ (long wave asymptotics) then
\begin{equation}
 \gamma = c\,\sigma^{\frac{2}{3}}\left(1 + O(\nu^{\frac{2}{3}})\right),
 \label{gamma0}
\end{equation}
where exact value of $c$ is given by \rf{c}.
\item[(b)] If $\lambda >0$ and $\nu  \to \infty$ (short wave asymptotics) then
\begin{equation}
\gamma =\frac{\sigma^2}{8\omega^2 }\left( 1-\frac{15}{16}\,\nu ^{-2}+O(\nu ^{-4})\right).
\label{gamma_h}
\end{equation}
\item[(c)] If $\lambda <0$ and $\nu  \to \infty$ then
\begin{equation}
\gamma = \omega(1+O(\nu^{-1})).
\label{gamma_neg}
\end{equation}
\end{itemize}
\end{theorem}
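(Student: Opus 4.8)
The plan is to reduce all three parts to the behaviour, as $\nu\to0$ and $\nu\to\infty$, of the integral $I(\nu)=\int_{-\infty}^{\infty}p(x)\,\frac{1-x^{2}}{(1+x^{2})^{2}}\,dx$ of Theorem~1 (together with $J(\nu)=\int_{-\infty}^{\infty}p(x)\,\frac{x}{1+x^{2}}\,dx$ when $\lambda<0$), since \rf{gamma1} and \rf{gamma-} give $\gamma=\frac{\omega}{\nu}I(\nu)$ for $\lambda>0$ and $\gamma=\frac{\omega}{\nu}I(\nu)-2\omega J(\nu)$ for $\lambda<0$, while $\omega=(\nu/2)^{1/3}\sigma^{2/3}$, so $\gamma=2^{-1/3}\sigma^{2/3}\nu^{-2/3}I(\nu)$ when $\lambda>0$. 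Thus (b) is the statement $I(\nu)=\tfrac14-\tfrac{15}{64}\nu^{-2}+O(\nu^{-4})$, (c) is $J(\nu)=-\tfrac12+O(\nu^{-1})$ together with $\nu^{-1}I(\nu)=O(\nu^{-2})$, and (a) is $\nu^{-2/3}I(\nu)\to 2^{1/3}c$. In each regime the work is to pass from the exact formulas \rf{p}--\rf{Cp} (resp. \rf{p-}--\rf{c-}) for $p$ and $C$ to a usable asymptotic form of $p$, and then integrate.

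For (b) ($\lambda>0$, $\nu\to\infty$) the key observation is that $\Phi'(x)=\nu(1+x^{2})$ is large, so iterated integration by parts gives, uniformly in $x$, $\int_{x}^{\infty}e^{-\Phi(t)}\,dt=e^{-\Phi(x)}\bigl(\tfrac1{\Phi'}-\tfrac{\Phi''}{(\Phi')^{3}}+\tfrac{3(\Phi'')^{2}-\Phi'\Phi'''}{(\Phi')^{5}}-\cdots\bigr)$, the ratio of successive terms being $O(\nu^{-1})$ uniformly because $\Phi''/(\Phi')^{2}=2x/[\nu(1+x^{2})^{2}]$ is bounded; uniformity here is the technical point. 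Watson's lemma applied to $\int_{0}^{\infty}x^{-1/2}e^{-2\Phi(x)}\,dx$ (expand $e^{-2\nu x^{3}/3}$ and integrate term by term against $x^{-1/2}e^{-2\nu x}$ using $\int_0^\infty x^{k-1/2}e^{-2\nu x}dx=\Gamma(k+\tfrac12)(2\nu)^{-k-1/2}$) gives $C^{-1}=\tfrac{\pi}{\nu}\bigl(1-\tfrac5{32}\nu^{-2}+\cdots\bigr)$, whence $p(x)=\tfrac1{\pi(1+x^{2})}-\tfrac{2x}{\pi\nu(1+x^{2})^{3}}+\tfrac1{\pi\nu^{2}}\bigl[\tfrac{10x^{2}-2}{(1+x^{2})^{5}}+\tfrac5{32(1+x^{2})}\bigr]+O(\nu^{-3})$; the leading term is the Cauchy density, as it must be since $-\cot\theta$ is Cauchy when $\theta$ equidistributes. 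Integrating this against $\tfrac{1-x^{2}}{(1+x^{2})^{2}}$ (legitimate, as the weight is integrable), the $\nu^{-1}$ term vanishes by oddness and the remaining rational integrals, evaluated via $\int_{-\infty}^{\infty}x^{2m}(1+x^{2})^{-n}dx=B(m+\tfrac12,n-m-\tfrac12)$, yield $I(\nu)=\tfrac14-\tfrac{15}{64}\nu^{-2}+O(\nu^{-4})$, i.e. \rf{gamma_h}.

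For (c) ($\lambda<0$, $\nu\to\infty$) the difference is that $\Psi'(x)=\nu(x^{2}-1)$ changes sign, so $e^{-\Psi(t)}$ has a sharp interior maximum at $t=1$; a Laplace computation there gives $\int_{x}^{\infty}e^{-\Psi(t)}\,dt\sim\sqrt{\pi/\nu}\,e^{2\nu/3}$ for $x$ bounded left of $1$, and $\int_{0}^{\infty}x^{-1/2}e^{-2\Psi(x)}\,dx\sim\sqrt{\pi/2\nu}\,e^{4\nu/3}$, so $C\sim\tfrac{\nu}{\pi}e^{-4\nu/3}$ and $p(x)\sim\sqrt{\nu/\pi}\,e^{\nu(x^{3}/3-x-2/3)}$. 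The exponent $x^{3}/3-x-2/3$ attains its maximum value $0$ at $x=-1$ and is negative elsewhere for $x<1$ (while $p(x)\sim e^{-4\nu/3}/[\pi(x^{2}-1)]$ for $x>1$), so $p$ concentrates like a Gaussian of width $\sim\nu^{-1/2}$ at $x=-1$, the stable rest point of the drift $\omega(z^{2}-1)$, with exponentially small tails. Consequently $J(\nu)\to\tfrac{-1}{1+1}=-\tfrac12$ with $O(\nu^{-1})$ error (from the variance $\sim\tfrac1{2\nu}$ and the skew of the peak), while $\nu^{-1}I(\nu)=O(\nu^{-2})$ since $\tfrac{1-x^{2}}{(1+x^{2})^{2}}$ vanishes at $x=-1$; adding, $\gamma=\omega(1+O(\nu^{-1}))$, i.e. \rf{gamma_neg}.

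For (a) ($\lambda\gtrless0$, $\nu\to0$) the natural rescaling is $x=\nu^{-1/3}y$: then $\Phi(\nu^{-1/3}y)=\tfrac{y^{3}}{3}+\nu^{2/3}y\to\tfrac{y^{3}}{3}$ and $\Psi(\nu^{-1/3}y)=\tfrac{y^{3}}{3}-\nu^{2/3}y\to\tfrac{y^{3}}{3}$, so the two signs of $\lambda$ give the same limit, and $p$ spreads out as $p(x)=\nu^{1/3}\widehat q(\nu^{1/3}x)(1+o(1))$ with the limiting probability density $\widehat q(y)=\tfrac1{\sqrt{2\pi}A}e^{y^{3}/3}\int_{y}^{\infty}e^{-s^{3}/3}\,ds$, $A=\int_{0}^{\infty}s^{-1/2}e^{-2s^{3}/3}\,ds$. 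The one delicate point is that $\tfrac{1-x^{2}}{(1+x^{2})^{2}}=\bigl(\tfrac{x}{1+x^{2}}\bigr)'$ has vanishing integral and concentrates on $|x|\lesssim1$, so substituting the rescaling directly produces a divergent integral; I resolve this by integrating by parts — the boundary terms vanish because $p(x)\sim C/[\nu(1+x^{2})]$ at $\pm\infty$ — and using the flux identity $\nu^{-1}p'(x)=(1+x^{2})p(x)-C/\nu$ (which follows from \rf{dep} and \rf{p}), so that only the odd part of $p'$ pairs with the odd weight $\tfrac{x}{1+x^{2}}$ and the rescaled limit converges; equivalently $I(\nu)=-\nu\,\mathrm{p.v.}\!\int x\,p(x)\,dx$. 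The limit then gives $\nu^{-2/3}I(\nu)\to 2^{1/3}c$ with $c$ the explicit constant of \rf{c}, recognizable as the zero-energy Lyapunov exponent of \rf{H}, a combination of $\Gamma$-values built from $A$ and $\int_{0}^{\infty}e^{-s^{3}/3}\,ds=3^{-2/3}\Gamma(\tfrac13)$, while the $O(\nu^{2/3})$ error is produced by retaining the $\nu^{2/3}y$ correction in the exponent. I expect the main obstacle to be precisely this singular-integral bookkeeping and the identification of $c$ in (a); in (b)--(c) the effort lies in establishing the uniformity of the Laplace/Watson expansions.
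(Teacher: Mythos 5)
Your proposal is correct, and parts (b) and (c) follow essentially the paper's own route: for (b) the same Watson-lemma expansion of $C^{-1}$ under $t=2\nu x$ and the same uniform iterated integration by parts for $\int_x^\infty e^{-\Phi(t)}\,dt$ (your expansion of $p$ and the resulting value $I(\nu)=\tfrac14-\tfrac{15}{64}\nu^{-2}+O(\nu^{-4})$ reproduce \rf{gamma_h} exactly); for (c) the paper applies Laplace's method to the double integral with maximum at $t=1$, $x=-1$, which is the same computation you phrase as concentration of $p$ in a Gaussian of width $\nu^{-1/2}$ at the stable rest point $x=-1$. Part (a) is where you genuinely diverge. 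The paper rescales $x=\nu^{-1/3}s$ and then handles the non-uniformity of the weight $\frac{\nu^{2/3}-s^2}{(\nu^{2/3}+s^2)^2}$ (whose double poles at $s=\pm i\nu^{1/3}$ coalesce at the origin) by deforming onto the contour $L$ and extracting the residue at $s=i\nu^{1/3}$, arriving at the principal-value representation \rf{c}. You instead exploit that the weight is $\bigl(\tfrac{x}{1+x^2}\bigr)'$ together with the first integral $\nu^{-1}p'=(1+x^2)p-C/\nu$ of the stationary equation \rf{dep} to rewrite $I(\nu)=-\nu\,\mathrm{p.v.}\!\int xp\,dx$, which then rescales cleanly; this is the classical flux identity of the Frisch--Lloyd/Rice approach, it avoids complex analysis entirely, and it makes both the $O(\nu^{2/3})$ error (analyticity in $\mu=\nu^{2/3}$ of the rescaled density) and the constant $c$ (the principal-value mean of the limiting rescaled density $\widehat q$) quite transparent. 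The trade-off is that you must control the tails of the principal value uniformly in $\nu$ — which your flux identity does supply, since $xp-\tfrac{C}{\nu}\tfrac{x}{1+x^2}=O(x^{-4})$ — and you still owe one short computation: the identification of $-2^{-1/3}\,\mathrm{p.v.}\!\int y\,\widehat q(y)\,dy$ with the paper's expression \rf{c} (one more integration by parts at the level of the limit density, using $\widehat q\,'=y^2\widehat q-\mathrm{const}$, does it). The side remark that $c$ reduces to $\Gamma$-values is the standard Halperin-type evaluation and is not needed for the statement.
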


\begin{proof}
\begin{itemize}
\item[(a)] 
Let us find the asymptotic expansion of $C^{-1}$ in \rf{Cp} using the substitution $t=2\nu x^3/3$. If $\lambda > 0$ then 
\begin{align}
  C^{-1} &= \sqrt{\frac{2\pi}{\nu}}\int_0^\infty \frac{e^{-2\nu \left(x+\frac{x^3}{3} \right)}}{\sqrt{x}}\,dx=\frac{1}{3}\sqrt{\frac{2\pi}{\nu}}
\left( \frac{3}{2\nu}\right)^{\frac{1}{6}} \int_0^\infty t^{-\frac{5}{6}} e^{-t-(12\nu^2 t)^{\frac{1}{3}}}\,dt  \nonu \\[2mm]
& = \sum_{n=0}^\infty (-1)^n a_n\, \nu^{\frac{2(n-1)}{3}}, \quad
a_0 = \pi^\oh \,2^{\frac{1}{3}}\, 3^{-\frac{5}{6}}\, \Gamma \left(\frac{1}{6} \right), \quad \nu \to 0,
\label{a}
\end{align}
where $a_n >0$ and $\Gamma(x)$ is the Gamma function. Hence, for the constant $C$ we obtain
\begin{equation}
C = a_0\, \nu^{\frac{2}{3}} \left(1 + O(\nu^{\frac{2}{3}})\right), \quad \nu \to 0.
\label{C1}
\end{equation}

To find asymptotics of $\gamma$ we use two substitutions
$x=\nu^{-\frac{1}{3}}s$ and $t=\nu^{-\frac{1}{3}}y$ in \rf{gamma1} 
and \rf{p}, respectively. Then
\begin{align}
 \gamma &= \omega \nu^{-1}C\int_{-\infty}^{\infty}
 e^{\nu^{\frac{2}{3}}s+\frac{s^3}{3}} \frac{\nu^\frac{2}{3}
 -s^2}{(\nu^\frac{2}{3}+s^2)^2}\,ds \int_s^\infty
 e^{-\nu^{\frac{2}{3}}y-\frac{y^3}{3}} dy \nonu \\[2mm]
 & = \omega \nu^{-1} C\int_{-\infty}^{\infty}
 \frac{\nu^\frac{2}{3}-s^2}{(\nu^\frac{2}{3}+s^2)^2}\, Q(s,\nu)\,ds, 
 \label{gamma11}
\end{align}
where 
\begin{align}
 Q(s,\nu) = \int_s^\infty
 e^{-\nu^{\frac{2}{3}}(y-s)-\frac{1}{3}(y^3-s^3)}\, dy 
 = \int_0^\infty e^{-\nu^{\frac{2}{3}}t-\frac{1}{3}[(t+s)^3-s^3]}\, dt.
 \label{Q1}
\end{align}
Denote by $L$ a contour in the complex $s$-plane which consists of
the real axis with the interval $(-1,1)$ replaced by the semicircle $|s|=1$, $\pi \geq \arg s \geq 0$. Then for $\nu < 1$
\begin{equation}
 \gamma = \omega \nu^{-1} C\left\{\int_{L}
 \frac{\nu^\frac{2}{3}-s^2}{(\nu^\frac{2}{3}+s^2)^2}\, Q(s,\nu)\,ds
 + 2\pi i\res_{s = i\nu^\frac{1}{3}} \frac{\nu^\frac{2}{3}-s^2}{(\nu^\frac{2}{3}+s^2)^2}\, Q(s,\nu)  \right\}.
 \label{cont}
\end{equation}
One can show that $Q$ and each of its derivatives with respect to $\mu=\nu^\frac{2}{3}$ are uniformly bounded in $\nu \leq 1$, $s \in \R$. Thus, the first term in \rf{cont} is infinitely smooth function of $\mu$, $\mu < 1$, and
\begin{align}
 &\dst \int_{L}\frac{\nu^\frac{2}{3}-s^2}{(\nu^\frac{2}{3}+s^2)^2}\, Q(s,\nu)\,ds 
 = -\int_{L}\frac{Q(s,0)}{s^2}\,ds + O(\nu^\frac{2}{3}), \quad \nu \to 0.
 \label{main}
\end{align}
The main term in \rf{main} can also be written in the form
\begin{align}
 -\int_{L}\frac{Q(s,0)}{s^2}\,ds
 & = \dst \pv_{-\infty}^\infty \frac{Q(0,0) - Q(s,0)}{s^2}\,ds 
 + \pi i\, Q^\prime_s (0,0) \nonu \\[2mm]
 & = \dst \pv_{-\infty}^\infty \frac{Q(0,0) - Q(s,0)}{s^2}\,ds 
 - \pi i,
\end{align}
where $\int\hspace{-8pt}\sst{-}$ stands for the Cauchy principle value. The second term in \rf{cont} is analytic in $\nu^{\frac{1}{3}}$ and hence
\begin{align}
  &2\pi i\res_{s = i\nu^\frac{1}{3}} \frac{\nu^\frac{2}{3}-s^2}{(\nu^\frac{2}{3}+s^2)^2}\, Q(s,\nu)
  =-\pi i Q^\prime_s (i\nu^{\frac{1}{3}},\nu) \nonu \\[2mm]
  &= -\pi i \left[a + b\nu^{\frac{1}{3}} +
  O(\nu^\frac{2}{3}) \right],
  \label{}
\end{align}
where $\dst a = -\int_0^\infty t^2 e^{-\frac{1}{3}t^3}\, dt = -1$ and
$\dst b = i\int_0^\infty t(t^3-2) e^{-\frac{1}{3}t^3}\, dt =0$.
Thus, we obtain two formulas for computation of $\gamma$: through the principal value and as a contour integral
\begin{align}
 \gamma &= \dst \omega \nu^{-1} C \left\{
 \pv_{-\infty}^\infty \frac{Q(0,0)-Q(s,0)}{s^2}\,ds + O(\nu^\frac{2}{3})\right\} 
 \nonu \\[2mm]
 &=\dst \omega \nu^{-1}C \left\{-\int_{L}\frac{Q(s,0)}{s^2}\,ds + \pi i + O(\nu^\frac{2}{3})\right\} \nonu \\[2mm]
 &=c\, \sigma ^{\frac{2}{3}}\left(1 + O(\nu^{\frac{2}{3}})\right), \quad \nu \to 0,
 \label{gamma_a}
\end{align}
where 
\begin{equation}
c = \dst \frac{3^{\frac{5}{6}}}{\pi^{\frac{1}{2}}\,2^{\frac{2}{3}}\,\Gamma \left( \frac{1}{6}\right)}\,\pv_{-\infty}^\infty \frac{Q(0,0)-Q(s,0)}{s^2}\,ds. 
\label{c}
\end{equation}

If $\lambda < 0$ then formula \rf{c-} for $C^{-1}$ differs from 
\rf{Cp} only by a sign in the exponent. 
Therefore expansion \rf{a} with $(-1)^n a_n$ replaced by $a_n$ holds:
\begin{align}
 C^{-1} &= \dst \frac{1}{3}\sqrt{\frac{2\pi}{\nu}}\left(\frac{3}{2\nu} \right)^{\frac{1}{6}}\int_0^\infty t^{-\frac{5}{6}} e^{-t+(12\nu^2 t)^{\frac{1}{3}}}\,dt %\nonu \\[2mm]
 = \dst \sum_{n=0}^\infty a_n \,\nu^{\frac{2(n-1)}{3}}, \quad \nu \to 0.
\end{align}
Thus,  asymptotics \rf{C1} is valid in this case.

The Lyapunov exponent is determined from \rf{p-} and \rf{gamma-}:
\begin{align}
 \gamma &= \dst \frac{\sigma ^{2}}{2\omega ^{2}}\,C\int_{-\infty}^{\infty }\frac{1-x^2}{(1+x^2)^2}\,e^{\nu(\frac{x^3}{3}-x)}\, dx\int_x^{\infty} e^{-\nu(\frac{t^3}{3}-t)}\, dt \nonu \\[2mm]
& \dst -2\omega \,C\int_{-\infty}^{\infty }\frac{x e^{\nu(\frac{x^3}{3}-x)}}{1+x^2}\, dx\int_x^{\infty} e^{-\nu(\frac{t^3}{3}-t)}\, dt  \nonu\\[2mm]
&\dst =\omega \nu^{-1} C\int_{-\infty}^{\infty }\frac{\nu^\frac{2}{3}-x^2}{(\nu^{\frac{2}{3}}+x^2)^2}\, R(x,\nu) \, dx 
 -2\omega \nu^{-\frac{1}{3}}C \int_{-\infty}^{\infty }\frac{xR(x,\nu)\,dx}{\nu^\frac{2}{3}+x^2},
\label{gamma31}
\end{align}
where 
\begin{align}
 R(x,\nu) = \int_x^{\infty} e^{\nu^\frac{2}{3}(t-x)-\frac{1}{3}(t^3 -x^3)}\, dt = \int_0^{\infty} e^{\nu^\frac{2}{3}y-\frac{1}{3}((y+x)^3 -x^3)}\, dy.
\end{align}
Denote the terms on the right hand side of \rf{gamma31} by $I_1$ and $I_2$, respectively.
Then $I_1$ is analogous to that in \rf{gamma11} and gives the main contribution to $\gamma$. To determine asymptotics of $I_1$, we use the same contour $L$ in the complex $s$-plane as in \rf{cont}:
\begin{align}
\fl I_1 & = \omega \nu^{-1}C\left\{\int_{L} \frac{\nu^\frac{2}{3}-s^2}{(\nu^{\frac{2}{3}}+s^2)^2}\, R(s,\nu) \, ds + 2\pi i\res_{s = i\nu^\frac{1}{3}} \frac{\nu^\frac{2}{3}-s^2}{(\nu^\frac{2}{3}+s^2)^2}\, R(s,\nu) 
  \right\} \nonu \\[2mm]
\fl & = \omega \nu^{-1}C\left\{-\int_{L} \frac{R(s,0)}{s^2}\, ds - \pi i  R^\prime_x (i\nu^{\frac{1}{3}},\nu) + O(\nu^\frac{2}{3}) \right\} \nonu \\[2mm]
\fl & = \omega \nu^{-1}C\left\{-\int_{L} \frac{Q(s,0)}{s^2}\, ds + \pi i + O(\nu^\frac{2}{3}) \right\}  \nonu \\[2mm] 
\fl & = \omega \nu^{-1}C\left\{\pv_{-\infty}^{\infty} \frac{Q(0,0)-Q(s,0)\,ds}{s^2} + O(\nu^\frac{2}{3})\right\} 
=c\, \sigma ^{\frac{2}{3}}\left(1 + O(\nu^{\frac{2}{3}})\right), \quad \nu \to 0,
\end{align}
where $c$ is defined by \rf{c}. The second term $I_2$ in \rf{gamma31} has smaller order
in $\nu$:
\begin{align}
\fl I_2 & = -2\omega \nu^{-\frac{1}{3}}C \left\{\int_{L}\frac{sR(s,\nu)\,ds}{\nu^\frac{2}{3}+s^2}  + 2\pi i\res_{s = i\nu^\frac{1}{3}} \frac{\nu^\frac{2}{3}-s^2}{(\nu^\frac{2}{3}+s^2)^2}\, R(s,\nu) 
  \right\} \nonu \\[2mm]
\fl & = -2\omega \nu^{-\frac{1}{3}}C\left\{\int_{L} \frac{R(s,0)}{s}\, ds + \pi i  R (i\nu^{\frac{1}{3}},\nu) + O(\nu^\frac{2}{3}) \right\} \nonu \\[2mm]
\fl & = -2\omega \nu^{-\frac{1}{3}}C\Biggl\{\pv_{-\infty}^\infty \frac{R(s,0)}{s}\, ds - \pi i\res_{s = i\nu^\frac{1}{3}} \frac{R(s,0)}{s} 
+ \pi i  R (i\nu^{\frac{1}{3}},\nu)+ O(\nu^\frac{2}{3}) \Biggr\}  \nonu \\[2mm] 
\fl  &= -2\omega \nu^{-\frac{1}{3}}C\left\{\pv_{-\infty}^{\infty} \frac{Q(s,0)}{s}\,ds 
+ O(\nu^\frac{2}{3})\right\}=\sigma^\frac{2}{3} O(\nu^\frac{2}{3}), \quad \nu \to 0. 
\end{align}
Thus, asymptotics \rf{gamma_a} is valid for $\lambda < 0$ as well.

\item[(b)] 
Using the substitution $t=2\nu x$ we compute asymptotics of $C^{-1}$ in \rf{Cp}
\begin{align}
 C^{-1} &= \frac{\sqrt{\pi}}{\nu}\int_0^\infty t^{-\oh} e^{-t-\frac{t^3}{12\nu^2}}\,dt 
= \sum_{n=0}^\infty b_n \nu^{-2n-1}, \\[2mm]
 b_n &= \frac{(-1)^{n}}{n!} \sqrt{\pi}\, \frac{\Gamma
\left( 3n+\frac{1}{2}\right) }{12^{n}}, \quad \nu \to \infty.
\end{align}
Therefore, for asymptotics of $C$ we have
\begin{equation}
C=\frac{\nu}{\pi }\left( 1+\frac{5}{32}\nu ^{-2}+O\left( \nu ^{-4}\right)
\right) ,\quad \nu \to \infty.  
\label{c_asy_inf}
\end{equation}
Next, we find asymptotics of \rf{p} using repeated integration by parts:
\begin{align}
 &\int_x^{\infty }e^{-\nu \left( t+ \frac{t^3}{3}\right)}dt
 =\int_x^{\infty }\frac{d e^{-\nu \left(t+ \frac{t^3}{3}\right)}}{-\nu (1+t^{2})} \nonu \\[2mm]
 &=\frac{e^{-\nu \left( x+ \frac{x^3}{3}\right)}}{\nu (1+x^{2})}\Biggl(
1-\frac{2x}{\nu (1+x^{2})^{2}} +\frac{2(5x^{2}-1)}{\nu ^{2}(1+x^{2})^{4}}%
+O(\nu ^{-3})\Biggr) ,\quad \nu \to \infty.
\label{int_asy_inf}
\end{align}
This asymptotics is uniform in $x$. Hence, from \rf{gamma1}, \rf{p}, \rf{c_asy_inf}, and \rf{int_asy_inf} we obtain
\begin{align}
\fl \gamma &=\frac{\sigma^2}{2\nu \omega^2} \,C\int_{-\infty }^{\infty }\frac{1-x^{2}}{(1+x^{2})^{3}}%
\left( 1-\frac{2x}{\nu (1+x^{2})^{2}}+\frac{2(5x^{2}-1)}{\nu
^{2}(1+x^{2})^{4}} +O(\nu ^{-3}) \right) dx \quad \nonu \\[2mm]
\fl &=\frac{\sigma^2}{8\omega^2 }\left( 1-\frac{15}{16}\,\nu ^{-2}+O(\nu ^{-4})\right), \quad \nu \to \infty.
\end{align}

\item[(c)] 
The exponent in the integrand \rf{c-} has a maximum at $x=1$. Therefore, Laplace's method \cite{F} immediately gives 
\begin{align}
 C^{-1} = \frac{\pi}{\nu}\,e^{\frac{4\nu}{3}}(1+O(\nu^{-1})),\quad \nu \to \infty.
\label{C-}
\end{align}
We rewrite the Lyapunov exponent  as follows:
\begin{align}
 \gamma &=\omega \nu^{-1}C\int_{-\infty}^{\infty }\frac{1-x^2}{(1+x^2)^2}\,e^{\nu(\frac{x^3}{3}-x)}\, dx\int_x^{\infty} e^{-\nu(\frac{t^3}{3}-t)}\, dt \nonu \\[2mm]
& -2\omega \,C\int_{-\infty}^{\infty }\frac{x e^{\nu(\frac{x^3}{3}-x)}}{1+x^2}\, dx\int_x^{\infty} e^{-\nu(\frac{t^3}{3}-t)}\, dt.
\label{gamma41}
\end{align}

Observe that the maximum of the exponent in the second integral is attained
at $t=1,x=-1$. Therefore, Laplace's method applied to the double integral \rf{gamma41} (see \cite{F} ) together with \rf{C-} gives
\begin{align}
& -2\omega \,C\int_{-\infty}^{\infty }\frac{x e^{\nu(\frac{x^3}{3}-x)}}{1+x^2}\, dx\int_x^{\infty} e^{-\nu(\frac{t^3}{3}-t)}\, dt \nonu \\[2mm]
&= 2\omega \,C \frac{\pi}{2\nu}\,e^{\frac{4\nu}{3}}\, (1+O(\nu^{-1})) = \omega (1+O(\nu^{-1})),\quad \nu \to \infty.
\end{align}
Since the integrand of the first term in \rf{gamma41} vanishes at $x=-1$, the same approach
leads to the asymptotics
\begin{align}
\fl \omega \nu^{-1}C\int_{-\infty}^{\infty }\frac{1-x^2}{(1+x^2)^2}\,e^{\nu(\frac{x^3}{3}-x)}\, dx\int_x^{\infty} e^{-\nu(\frac{t^3}{3}-t)}\, dt 
 = O(\nu^{-2}),\quad \nu \to \infty.%\nonu \\[2mm]
\end{align}
Hence, in this case we have
\begin{align}
\gamma = \omega(1+O(\nu^{-1})).
\end{align}

\end{itemize}
\end{proof}

\section{Lyapunov exponent of short-range correlated noise}
\setcounter{equation}{0}

By short-range correlated (or real) noise we understand a piecewise-constant process 
\begin{equation}
 \xi_{\Delta}(x) = \frac{\xi_n}{\sqrt{\Delta}}, \quad 
 x \in [n\Delta, (n+1)\Delta], \quad n=0,1,2, \ldots,
\end{equation}
where $\xi_n$ are i.i.d. random variables \rf{noise} and $\Delta$ is a small
finite number (correlation length).
Introducing the polar coordinates
\begin{equation}
\begin{array}{l}
\psi (n\Delta) = r (n\Delta) \sin \theta (n\Delta), \\[2mm]
\psi^\prime (n\Delta) = \omega r (n\Delta) \cos \theta (n\Delta),
\end{array}
\label{pol1}
\end{equation}
and relating the values of $\psi \left((n+1)\Delta\right)$ and $\omega^{-1} \psi^\prime ((n+1)\Delta)$ through the transfer matrix (see \cite{GMV11}) one can find
\begin{align}
\label{ph1}
 z((n+1)\Delta) &=z(n\Delta)- \frac{\sigma \xi_n\sqrt{\Delta}}{\omega}+\omega(1+z^2(n\Delta))\Delta-\sigma \xi_nz(n\Delta)\Delta^{3/2} \nonu \\[2mm] 
 &+\left(\frac{\sigma^2 \xi^2_n}{3\omega}+\left(1+z^2(n\Delta)\right)\omega^2 z(n\Delta)\right)\Delta^2
 +O(\Delta^{5/2}), \\[2mm]
 r^2((n+1)\Delta) &=r^2(n\Delta) \biggl(1+\frac{\sigma \xi_n \sqrt{\Delta}\sin 2\theta(n\Delta)}{\omega}+\frac{\sigma^2 \xi_n^2 \Delta}{\omega^2}\sin^2\theta(n\Delta) \nonu \\[2mm]
 &+\sigma\xi_n \Delta^{3/2}\cos 2\theta(n\Delta) +\frac{2\sigma^2\xi_n^2\Delta^2\sin(2\theta(n\Delta))}{3\omega} \biggr)+O(\Delta^{5/2}),
 \label{amp1}
\end{align}
where $z(n\Delta) = \cot \theta (n\Delta)$. For small $\Delta$ equations \rf{ph1}-\rf{amp1} can be approximated by stochastic differential equations
\begin{align}
\label{ph2}
 dz &=- \frac{\sigma}{\omega}\,dw + \omega(1+z^2)dx 
 +\Delta\left(\frac{\sigma^2}{3\omega}+\left(1+z^2\right)\omega^2 z\right)dx, \\[2mm]
 d\ln r &= -\frac{\sigma}{\omega}\frac{z}{1+z^2}\,dw+\frac{\sigma^2}{2\omega^2} \frac{1-z^2}{(1+z^2)^2}\,dx \nonu \\[2mm]
 &+\Delta\left[\frac{\sigma^2}{3\omega}\frac{z(z^2 -5)}{(1+z^2)^2} 
 + \frac{\sigma^2}{2\omega^4}\frac{z^2(z^2 -3)}{(1+z^2)^4}
 + \frac{3\sigma^4}{4\omega^4}\frac{2z^2 -1}{(1+z^2)^2} \right]dx
 \label{amp2}
\end{align}
We want to find asymptotics of the solutions of \rf{ph2}-\rf{amp2} under assumptions that $\sigma \ll 1$, 
$\Delta \ll 1$, and $\Delta \omega \ll 1$. Then we can drop the term $\sigma^2/(3\omega)$ in \rf{ph2} and using previously described procedure determine the limiting probability density $p(z)$ corresponding to the equation
\begin{equation}
 dz =- \frac{\sigma}{\omega}\,dw + \omega(1+z^2)(1+\Delta \omega z)dx.
 \label{ph3}
\end{equation}
Derivation of \rf{ph2}-\rf{ph3} is based on the Taylor expansion of \rf{ph1}-\rf{amp1} and has physical level of rigor. Subsequent solution of these equations, however, is rigorous mathematically.

As in \rf{dep}, $p(z)$ satisfies the equation
\begin{equation}
 \frac{d^2 p}{dz^2} - \nu \frac{d}{dz}\left( (1+z^2)(1+\Delta \omega z) p\right) = 0,
 \quad \nu = \frac{2\omega^3}{\sigma^2},
 \label{p_noise}
\end{equation}
subject to normalization condition \rf{norm_p}.
Solution of \rf{p_noise} can be represented as a power series in $\Delta$
\begin{equation}
 p(z) = p_0(z) + p_1(z) \Delta + p_2(z) \Delta^2 + \ldots,
\end{equation}
where $p_0(z)$ is given by \rf{p} and represents the limiting probability density
corresponded to the white noise. Function $p_1(z)$ satisfies the following equation
\begin{align}
 \frac{d^2 p_1}{dz^2} - \nu \frac{d}{dz}\left( (1+z^2) p_1 \right) = 
 \nu \omega \frac{d}{dz}\left( (1+z^2) z p_0\right)
 \label{p1_eq}
\end{align}
and the conditions
\begin{align}
 \int_{-\infty}^\infty p_1(z)\,dz=0, \quad p_1(z) \to 0 ~~ \text{as} ~~ z \to \pm \infty.
 \label{p1_con}
\end{align}
Solving \rf{p1_eq} with \rf{p1_con} we obtain
\begin{align}
 p_1(z) &= \nu \omega C e^{\Phi(z)} \int_z^\infty x(1+x^2)\,dx \int_x^\infty e^{-\Phi(t)}dt 
 + \nu \omega C^2 e^{\Phi(z)} \int_z^\infty e^{-\Phi(x)}dx \nonu \\[2mm]
 & \times \pv_{-\infty}^\infty e^{\Phi(z)}dz  \int_z^\infty x(1+x^2)\,dx \int_x^\infty e^{-\Phi(t)}dt. 
\end{align}
Integration by parts with \rf{c_asy_inf} gives asymptotic behaviour of $p_1(z)$
\begin{equation}
 p_1(z) = \frac{\omega}{\pi (1+z^2)} \left( z + \frac{1-3z^2}{\nu (1+z^2)^2} + O(\nu^{-2})\right), \quad \nu \to \infty.
\end{equation}
Hence, for a fixed frequency $\omega$ and a small correlation length $\Delta$ 
we obtain 
the following dependence of the Lyapunov exponent $\gamma_{\text r}$ on the noise intensity $\sigma$
\begin{align}
\gamma_{\text r} &= \int_{-\infty}^\infty \left[\frac{1}{\pi (1+z^2)}\left(1- \frac{2z}{\nu (1+z^2)^2}\right) +  \frac{\omega \Delta}{\pi (1+z^2)} \left( z + \frac{1-3z^2}{\nu (1+z^2)^2} \right) \right] \nonu \\[2mm]
&\times \left[\frac{\sigma^2}{2\omega^2} \frac{1-z^2}{(1+z^2)^2} + \Delta\left(\frac{\sigma^2}{3\omega}\frac{z(z^2 -5)}{(1+z^2)^2} 
 + \frac{\sigma^2}{2\omega^4}\frac{z^2(z^2 -3)}{(1+z^2)^4}\right) \right] dz \nonu \\[2mm]
 &= \frac{\sigma^2}{8\omega^2}\left( 1 - \frac{3\Delta}{8\omega^2} + O(\sigma^2) \right),
 \quad \sigma \to 0.
 \label{gamma_r}
\end{align}
Comparison of the Lyapunov exponents $\gamma$ and $\gamma_{\text r}$ for the 
stochastic oscillator driven by white noise process or a mean-zero function of an ergodic 
finite-state reversible Markov process, respectively,  has been done in \cite{PW92}-\cite{PW91} where it was qualitatively shown that $\gamma > \gamma_{\text r}$. Our result
gives also a quantitative estimate under the assumption that the short-range correlated noise is understood as a stationary Gaussian process with independent increments on intervals of length $\Delta \ll 1$.

\section{Numerical results}
\setcounter{equation}{0}

Since we do not estimate the remainders in asymptotic formulas \rf{gamma0}-\rf{gamma_neg},
we compare the formulas numerically with the exact ones given by \rf{gamma1}-\rf{gamma-} in order to determine the range of their validity.
Figures \ref{pdf1} and \ref{pdf2} show the graphs of the normalized Lyapunov exponent $\gamma/\omega$ \rf{gamma1} as a function of $\dst \nu = {2\omega^3}/{\sigma^2}, \;\; \omega = \sqrt{|\lambda|},$ for $\lambda > 0$. As $\nu \to 0$ in figure \ref{pdf1} the relative error of the asymptotic approximation of $\gamma$ becomes less than $0.7\%$ when $\nu < 10^{-3}$. The relative error of asymptotic approximation \rf{gamma_h} for large values of $\nu$ is less than $0.4\%$ if $\nu > 6$ in figure \ref{pdf2}.  
\begin{figure}[H]
\begin{center}
\includegraphics[width=0.6\textwidth, angle=0]{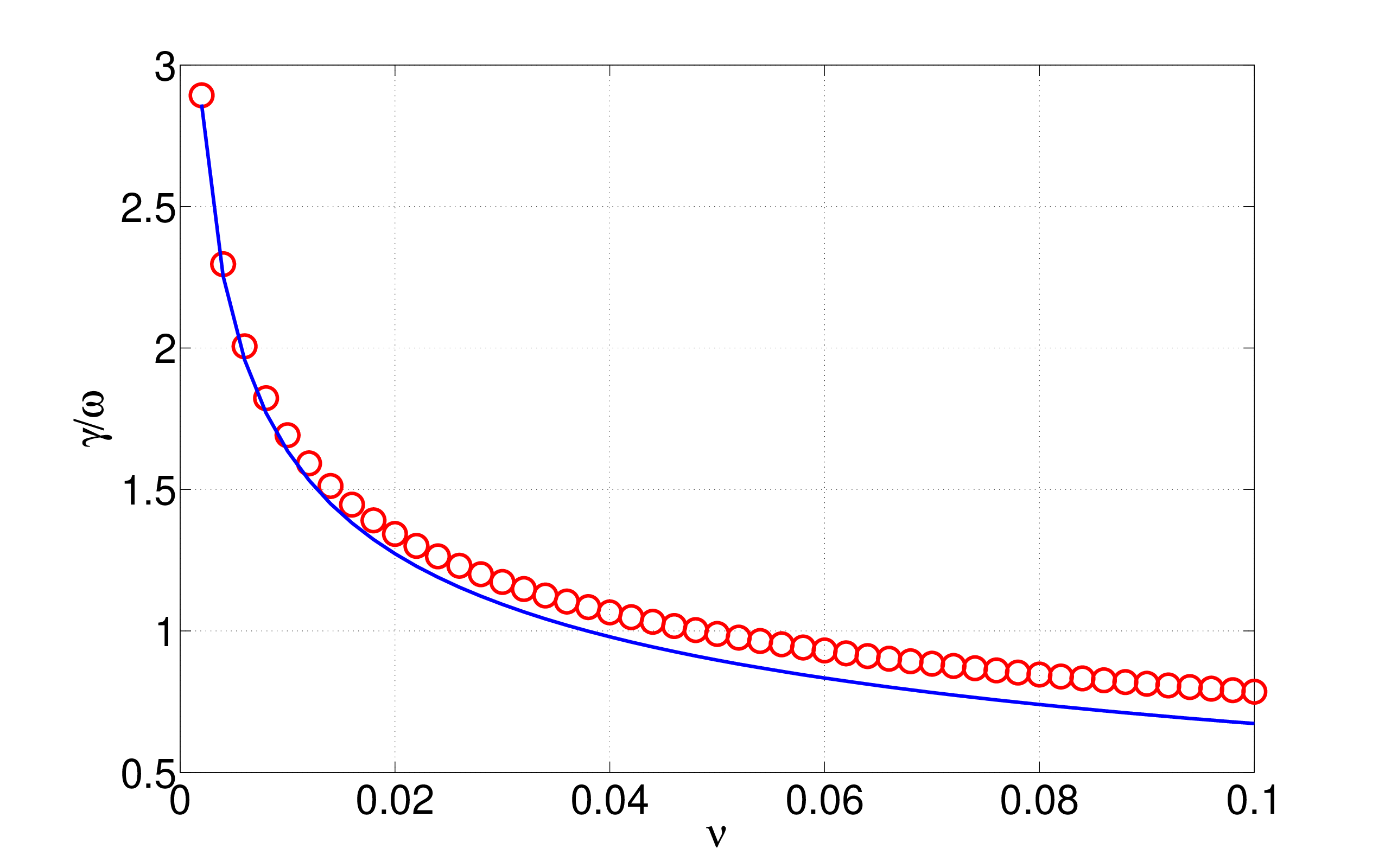}
\end{center}
\caption{Graph of the normalized Lyapunov exponent $\gamma/\omega$ of the Schr\"{o}dinger equation  with white noise potential (solid line) and its asymptotic approximation $\gamma/\omega = 0.3645 \nu^{-1/3}$ (circles) for small $\dst \nu = {2\omega^3}/{\sigma^2}, \; \omega = \sqrt{|\lambda|},$  and $\lambda > 0$.} 
\label{pdf1}
\end{figure}
\begin{figure}[H]
\begin{center}
\includegraphics[width=0.6\textwidth, angle=0]{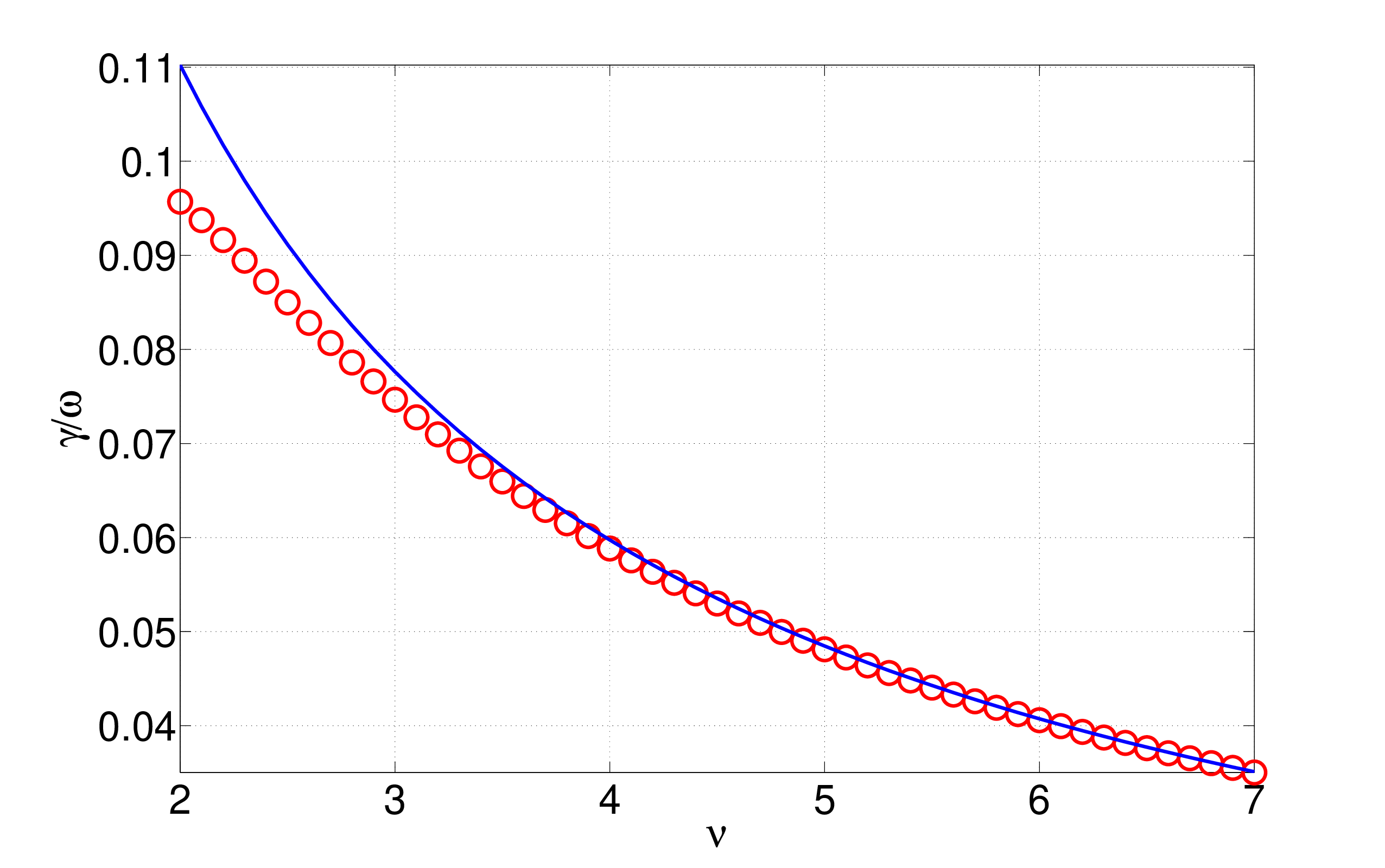}
\end{center}
\caption{Graph of the normalized Lyapunov exponent $\gamma/\omega$ of the Schr\"{o}dinger equation  with white noise potential (solid line) and its asymptotic approximation $\gamma/\omega =0.25\nu^{-1} (1-0.9375\nu^{-2})$ (circles) for large $\dst \nu = {2\omega^3}/{\sigma^2}, \; \omega = \sqrt{|\lambda|},$  and $\lambda > 0$.
} 
\label{pdf2}
\end{figure}

The graphs of the normalized Lyapunov exponent $\gamma/\omega$ \rf{gamma-} as a function of $\dst \nu = {2\omega^3}/{\sigma^2}, \;\; \omega = \sqrt{|\lambda|},$ for $\lambda < 0$ are shown in figures \ref{gamma0_neg} and \ref{gamma1_neg}. For small values of $\nu$ (figure \ref{gamma0_neg})
the relative error of the asymptotic formula \rf{gamma0} becomes less than $0.8\%$ if $\nu < 10^{-3}$. The relative error of \rf{gamma_neg} for large values of $\nu$ (figure \ref{gamma1_neg}) is less than $0.7\%$ as soon as $\nu > 40$. 
\begin{figure}[H]
\begin{center}
\includegraphics[width=0.6\textwidth, angle=0]{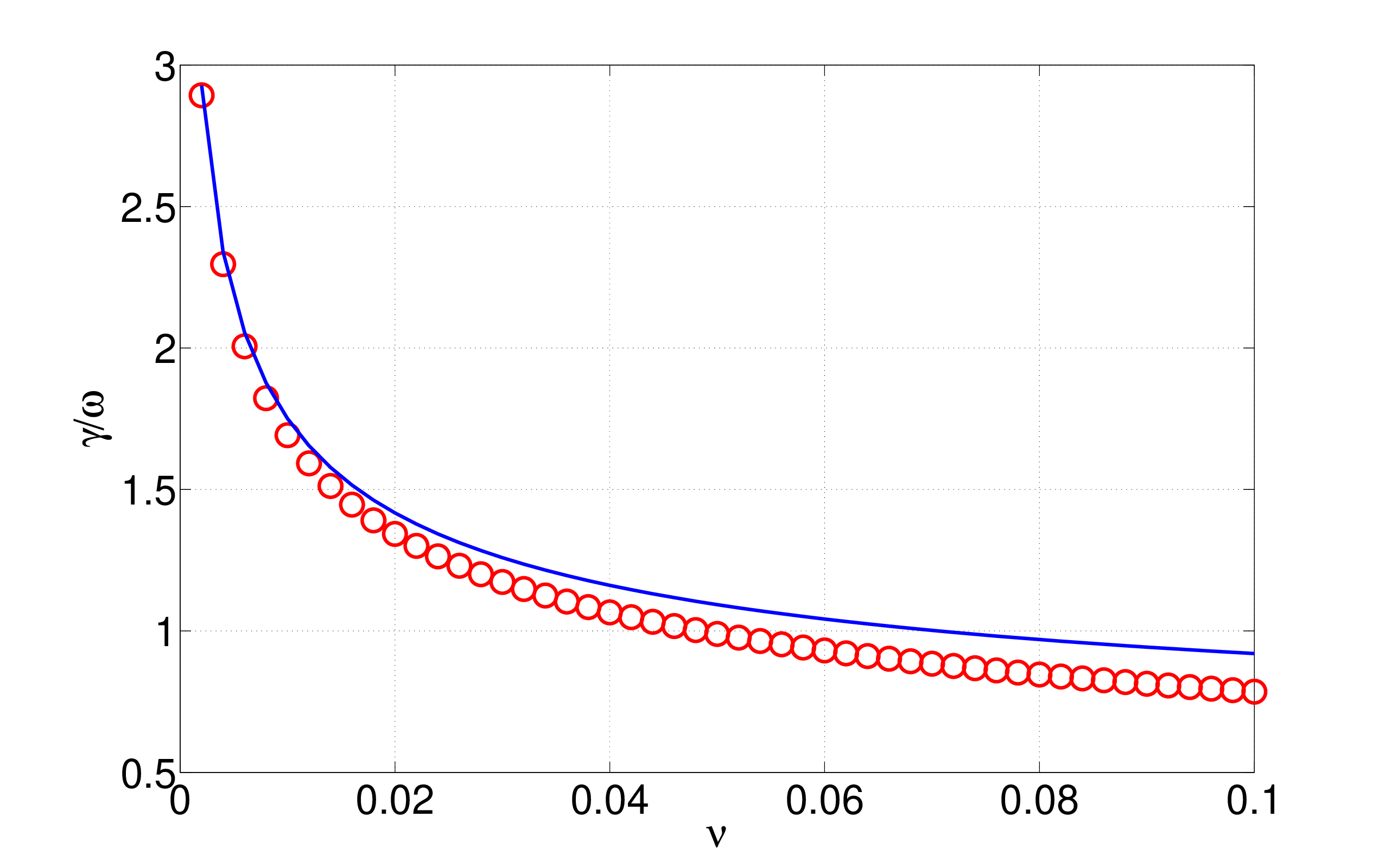}
\end{center}
\caption{Graph of the normalized Lyapunov exponent $\gamma/\omega$ of the Schr\"{o}dinger equation  with white noise potential (solid line) and its asymptotic approximation $\gamma/\omega = 0.3645 \nu^{-1/3}$ (circles) for small $\dst \nu = {2\omega^3}/{\sigma^2}, \; \omega = \sqrt{|\lambda|},$  and $\lambda < 0$.
} 
\label{gamma0_neg}
\end{figure}
\begin{figure}[h!tb]
\begin{center}
\includegraphics[width=0.6\textwidth, angle=0]{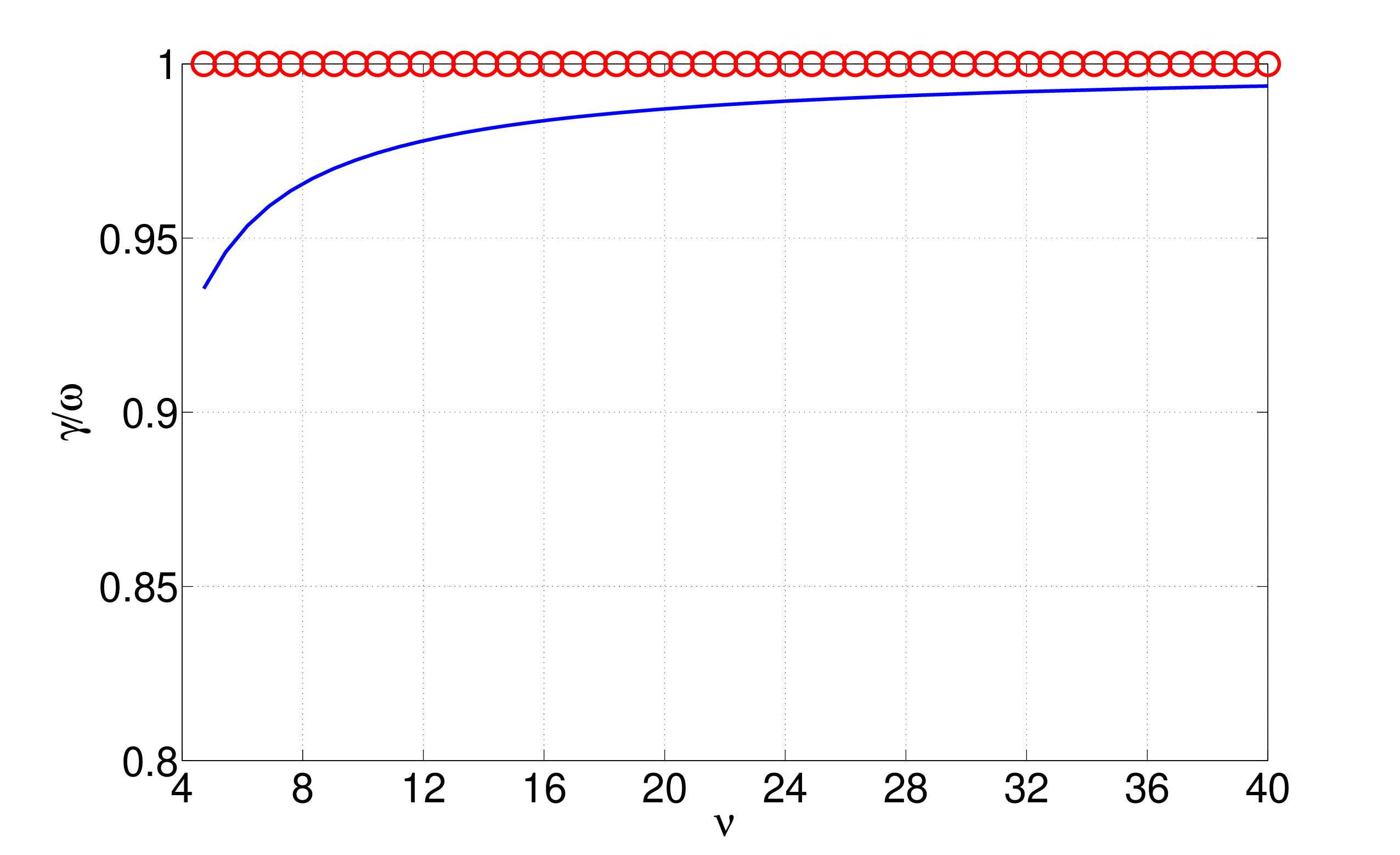}
\end{center}
\caption{Graph of the normalized Lyapunov exponent $\gamma/\omega$ of the Schr\"{o}dinger equation  with white noise potential (solid line) and its asymptotic approximation $\gamma/\omega = 1$ (circles) for large $\dst \nu = {2\omega^3}/{\sigma^2}, \; \omega = \sqrt{|\lambda|},$  and $\lambda < 0$.
} 
\label{gamma1_neg}
\end{figure}

\subsection*{Acknowledgement}

The work of S. Molchanov and B. Vainberg was partially supported by the NSF grant DMS-1008132.

\end{document}